\newtheorem{theorem}{Theorem}
\theoremstyle{definition}
\newtheorem{definition}{Definition}
\theoremstyle{fact}
\theoremstyle{remark}
\def\BibTeX{{\rm B\kern-.05em{\sc i\kern-.025em b}\kern-.08em
    T\kern-.1667em\lower.7ex\hbox{E}\kern-.125emX}}
\def\endthebibliography{%
	\def\@noitemerr{\@latex@warning{Empty `thebibliography' environment}}%
	\endlist
}
\begin{document}

\title{Decentralized On-line Task Reallocation on Parallel Computing Architectures with Safety-Critical Applications\\ 
	\thanks{This effort has been funded in part by SAFRAN and by the National Science Foundation, Grants CNS 1544332 and 1446758.}
}



\author{\IEEEauthorblockN{Thanakorn Khamvilai}
	\IEEEauthorblockA{\textit{School of Aerospace Engineering} \\
		\textit{Georgia Institute of Technology}\\
		Atlanta, GA, USA \\
		tkhamvilai3@gatech.edu}\\
	\IEEEauthorblockN{Philippe Baufreton}
	\IEEEauthorblockA{
	\textit{Safran Electronics \& Defense}\\
	Massy, France \\
	philippe.baufreton@safrangroup.com}
	\and
	\IEEEauthorblockN{Louis Sutter}
	\IEEEauthorblockA{\textit{School of Aerospace Engineering} \\
		\textit{Georgia Institute of Technology}\\
		Atlanta, GA, USA \\
		lsutter6@gatech.edu}\\
	\IEEEauthorblockN{Fran\c{c}ois Neumann}
	\IEEEauthorblockA{
		\textit{Safran Electronics \& Defense}\\
		Massy, France \\
		francois.neumann@safrangroup.com}
	\and
	\IEEEauthorblockN{Eric Feron}
	\IEEEauthorblockA{\textit{School of Aerospace Engineering} \\
		\textit{Georgia Institute of Technology}\\
		Atlanta, GA, USA \\
		eric.feron@aerospace.gatech.edu}
	}

\maketitle

\IEEEpeerreviewmaketitle 

\begin{abstract}

	This work presents a decentralized allocation algorithm of safety-critical application on parallel computing architectures, where individual Computational Units can be affected by faults. 
	
	The described method consists in representing the architecture by an abstract graph where each node represents a Computational Unit. Applications are also represented by the graph of Computational Units they require for execution. The problem is then to decide how to allocate Computational Units to applications to guarantee execution of the safety-critical application. The problem is formulated as an optimization problem, with the form of an Integer Linear Program. A state-of-the-art solver is then used to solve the problem.
	
	Decentralizing the allocation process is achieved through redundancy of the allocator executed on the architecture. No centralized element decides on the allocation of the entire architecture, thus improving the reliability of the system. 
	
	Experimental reproduction of a multi-core architecture is also presented. It is used to demonstrate the capabilities of the proposed allocation process to maintain the operation of a physical system in a decentralized way while individual component fails. 

\end{abstract}

\begin{IEEEkeywords}
parallel computing, multi-core, reconfigurable, safety-critical, fault tolerance, decentralized, integer linear programming
\end{IEEEkeywords}

\section{Introduction and prior art}\label{sec:introduction}

The onset of multi-core processors appeared as a golden opportunity for the embedded
systems industry to improve efficiency of embedded computers. Multicore
processors carry several benefits over single core ones, bringing more computational
power through parallelization without increasing chip's internal frequency,
and without increased energy consumption or increased heating. They
now pervade cellular communication devices and embedded electronics for mass-market,
for example, and many other industries are now taking advantage of
such processors, such as 
the 
automotive industry \cite{Multicore_Automotive}, 
the 
biotechnology industry \cite{Multicore_BioTech} 
and the 
circuit industry \cite{Multicore_Circuit}
.
However, as far as critical systems are concerned, these benefits come with
great certification challenges \cite{nowotsch2012leveraging} \cite{reichenbach2010multi}, since parallel applications on a multi-core processor
may interfere. The aerospace industry is yet undertaking to take up this
challenge \cite{durak2019introduction}.\\ 

A reconfigurable multi-core architecture that could host safety critical applications, e.g. \cite{REDEFINE:1}, \cite{5695561}, \cite{recore},
 can become an example of a safe multi-core processor by taking
advantage of the inherent redundancy of such processors that enables graceful
degradation \cite{kinnan2009}: when some core fails, we can use the multiple remaining ones by
reallocating affected applications to a healthy area of the chip.\\

The inherent redundancy in such parallel architecture can also be seen as an opportunity to increase the reliability of computing systems, be it in safety critical embedded systems or for computing centers requiring guaranties of continuity of service. \\

For example, several attempts have been made to increase the reliability of safety-critical systems using multi-core processors. In \cite{ABB_symm_MC}, an ``hypervisor" is used to organize access to shared resources for applications, including safety-critical ones. However, a failure of this hypervisor is not taken into account in this patent. Therefore, such technique just moves the problem since the whole reliability is carried by the reallocation decision organ, which constitutes a single point of failure: the most complex and efficient reallocator is pointless if the system it executes on fails. 
In \cite{ABB_IAC_conf}, backup allocations are pre-calculated for each failure case and they are stored by individual Computational Units (CUs). For small architecture with only a few CUs, this solution is satisfactory and ensures a continuous fault tolerance of the system without requiring a centralized allocator. However, storing backup configuration can require a lot of memory when the architecture becomes bigger. Also, the proposed approach does not consider application that can themselves be parallelized and executed on several CUss at the same time. \\

Our approach differs from these two solutions by providing an on-line and decentralized reallocation algorithm for a general architecture that can be represented by a graph and for parallelized applications requiring several CUs to execute. \\

Even though this work is motivated by a multi-core architecture, it presents a decentralized task allocation algorithm for an abstract parallel computing architecture made of a set of CUs connected together and forming a network. Such an architecture can represent for example a multi-core processor, 
with each CU standing for one core, a cluster of high-performance computers, or a team of mobile robots. The aim of the algorithm is to find the optimal allocation of an a priori defined set of tasks on the architecture while taking into account the faults affecting the CUs. The faults are assumed to be detected by the algorithm when they occur either via a timeout mechanism or a voter, but this work does not provide details of those fault detection mechanisms. As described later, two types of fault will be considered, the first one completely stopping the operation of the CU, and a second one considered to modify the computed output of the CU.\\

The second main feature of this work is the decentralized aspect of the allocation process. Decentralized means here that there is no central element deciding alone of the allocation for the rest of the architecture. Instead, we use redundant copies of the allocation algorithm executed on the architecture itself, meaning that the copies must reallocate themselves. This is achieved by using majority voting systems.\\ 

This work also presents an experimental setup reproducing several aspects of a parallel computing architecture and used to implement the proposed decentralized allocation algorithm. The setup uses a network of Raspberry Pi single board computers \cite{pi} to represent the CUs of the architecture. \\

\section{Theoretical aspect}
\subsection{Mathematical description of the allocation problem}

This section describes the mathematical formulation of the general allocation problem that is considered in this work. The idea is to use this mathematical formulation in an Integer Linear Program (ILP), whose solution is the best allocation of the tasks on the parallel computing platform (multi-core processors, network of computers in a computing center, etc), according to criteria described in Section \ref{subsubsec:obj_func}, taking into account the number of applications running, their priority, the number of reallocated applications and the length of communication paths between allocators and other applications. \\


The considered parallel computing platform is represented by a 
directed simple graph $\mathcal{G} = (V,E)$, where $V$ is the set of \textit{vertices} and $E \subseteq \left\{ \left( x, y\right) \in V^2 \mid x \neq y \right\}$ is the set of \textit{edges} \cite{mesbahi2010graph}. Each vertex of $\mathcal{G}$ represents a CU, 
for example one core in a multi-core processor or at a different scale, one computer in a massively parallel supercomputer, 
and each edge of $\mathcal{G}$ represents a physical communication link between two CUs. The communication links are considered bidirectional, and therefore the orientation of edges can be chosen arbitrarily: we choose them to be oriented only to write more conveniently further constraints on the communication flow. \\

The graph $\mathcal{G}$ therefore represents the topology of the platform. For example, the platform can have a simple square mesh topology, as represented in Fig. \ref{fig:blank_noc}. \\

\begin{figure}[h]
	\captionsetup{justification=centering}
	\centering
	\includegraphics[width=0.45\linewidth]{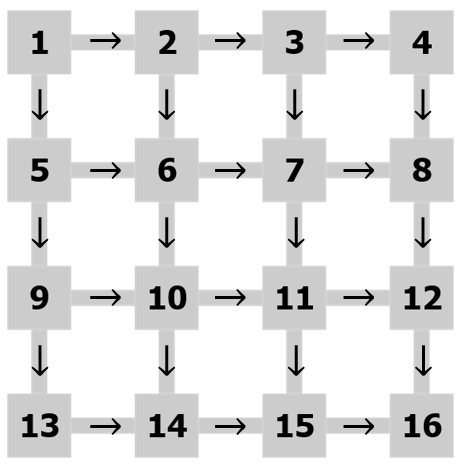}
	\caption{Example of square mesh topology. Orientation of edges are arbitrary.}
	\label{fig:blank_noc}
\end{figure}

From $\mathcal{G}$, we define parameters that will be used later in this work.
\begin{definition}
	$N_{\textrm{CUs}}$ is defined as the number of CUs in the computing platform, that is the number of vertices of $\mathcal{G}$.
\end{definition}	


\begin{definition}
	$N_{\textrm{paths}}$ is defined as the number of Physical Communication Links, or physical paths, in the platform, that is the number of edges of $\mathcal{G}$.
\end{definition}
\vspace{1em}

Let $N_{\textrm{app}} \in \mathbb{N}$ and $\mathcal{A} = \left\{ \mathrm{app}_{k} ,\ k \in \llbracket 1, N_{\textrm{app}} \rrbracket \right\}$ be a set of applications to be executed on the parallel computing platform. The applications in $\mathcal{A}$ are ranked by priority, $\mathrm{app}_{1}$ having the highest priority and $\mathrm{app}_{N_{\textrm{app}}}$ having the lowest one. The ranking is established \textit{a priori} and represents the tolerated order in which we stop applications in case of computing resource failures. 
In the context of a commercial aircraft, an example of such applications with different priority would the engine controller, with the highest priority, and a health monitoring application, with a lower priority, which is in charge of analyzing data from the engine in order to estimate its wear and to predict when maintenance operations are required. In case of computing resource failures, it would be tolerated in this context to stop the health monitoring application in order to maintain the execution of the engine controller. \\



For $k \in \llbracket 1, N_{\textrm{app}} \rrbracket$, we assume that the compiler for the considered architecture decomposes the application $\mathrm{app}_{k}$ into a undirected simple graph $\mathcal{G}_{k} = (V_k,E_k)$, where each vertex, that we will call \textit{Application Node}, represents a sub-task of $\mathrm{app}_{k}$ that must be executed by a CU, and each edge represents a required communication link between two Application Nodes, that we will call an \textit{Application Link}. Fig. \ref{fig:example_appli} gives an example of such application graphs.\\

\begin{figure}[h]
	\captionsetup{justification=centering}
	\centering
	\includegraphics[width=0.8\linewidth]{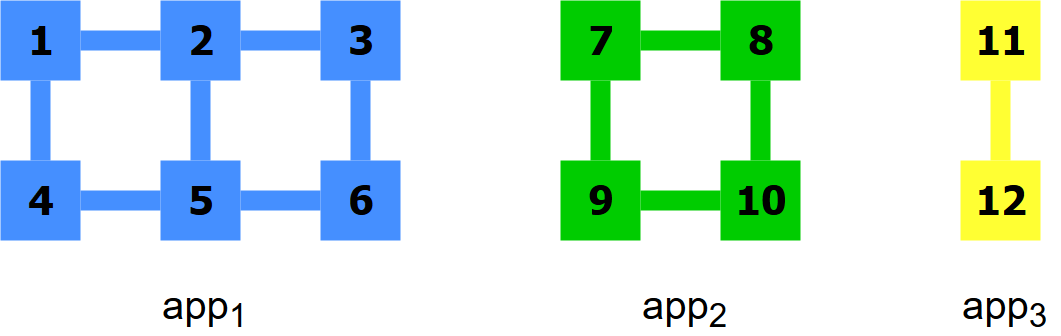}
	\caption{Example of application graphs. Each application node is identified with a unique index.\\
	$\mathrm{app}_{1}$ has highest priority, $\mathrm{app}_{3}$ has the lowest.}
	\label{fig:example_appli}
\end{figure}


From each graph $\mathcal{G}_{k}$ for $k \in \llbracket 1, N_{\textrm{app}} \rrbracket$, we define the following parameters.

\begin{definition}
	$N^k_{\textrm{nodes}}$ is defined as the number of Application Nodes in application $k$ and $N^k_{\textrm{links}}$ is defined as the number of Application Links in application $k$
\end{definition}

\begin{definition}	
	$N_{\textrm{nodes}} \coloneqq \sum_{k=1}^{N_{\textrm{apps}}} N^{k}_{\textrm{nodes}}$ is the total number of Application Nodes, and $N_{\textrm{links}} 
	\coloneqq \sum_{k=1}^{N_{\textrm{apps}}} N^{k}_{\textrm{links}}$ is the total number of Application Links.\\
\end{definition} 
\vspace{0.5em}

Each application node is given a global index $j \in \llbracket 1, N_{\textrm{nodes}} \rrbracket$ with the following procedure: the nodes of $\mathrm{app}_{1}$ keep the same indices as in the local numbering of vertices in $\mathcal{G}_{1}$ ; then the global indices for nodes of $\mathrm{app}_{2}$ are obtained by increasing their local indices by $N^1_{\textrm{nodes}}$ ; and so on for the nodes of $\mathrm{app}_{k}$, by increasing the local numbering by $\sum_{l=1}^{k-1} N^{l}_{\textrm{nodes}}$. The result of the global numbering of the nodes can be seen on Fig. \ref{fig:example_appli}. An identical process is applied to obtain a global numbering of the edges of the application graphs. \\

The problem that we tackle here is to assign applications to CUs of the architecture while faults affect some CUs, taking into account the priority of the applications and specific constraints of the architecture. A solution will look like Fig. \ref{fig:example_sol}. The approach that we take here to solve the problem is to formulate the allocation problem as Integer Linear Program (ILP) and use a state-of-the-art IP solver such as ``GNU Linear Programming Kit" (GLPK) \cite{glpk}.\\

\begin{figure}[h]
	\captionsetup{justification=centering}
	\centering
	\includegraphics[width=0.45\linewidth]{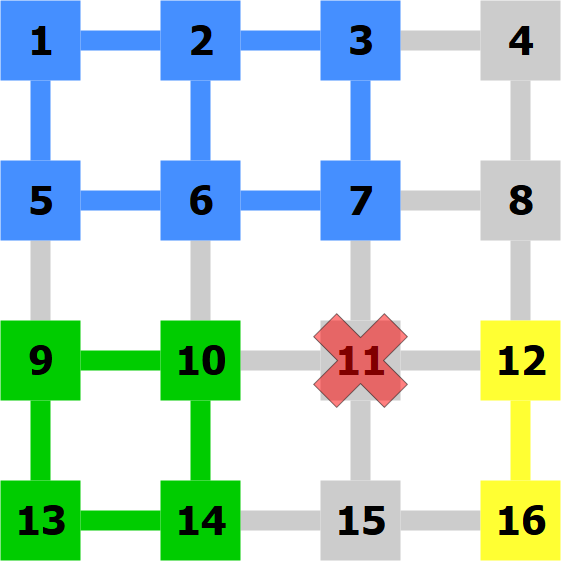}
	\caption{Example of a solution with a fault on CU 11.}
	\label{fig:example_sol}
\end{figure}


An additional aspect of the problem that we propose to solve is to make the allocation process decentralized, in the sense detailed in the introduction and in Section \ref{sec:decentr}, with no central computing element allocating the tasks according to the solution of the ILP problem. The way this decentralized allocation is achieved is specifically described in Section \ref{sec:decentr}: it involves several copies of the task computing the allocation and being executed on the platform itself. The number of such copies is the last parameter of our problem.

\begin{definition}
	$N_{\textrm{realloc}}$ is defined as the number of copies of the $\textit{Allocator Application}$.
\end{definition}

\vspace{1em}

The next section details how the allocation problem is formulated as an ILP problem. \\






\section{ILP formulation of the task allocation problem}\label{sec:algo}

\subsection{Matrix representation of graphs}

\begin{definition}
	From the graph representation $\mathcal{G} = (V,E)$ of the parallel computing platform, the $N_{\textrm{CUs}} \times N_{\textrm{paths}}$ \textit{incidence matrix} $G$ associated with $\mathcal{G}$ is defined as:
	\[[G]_{ij} \coloneqq \begin{cases} 
	-1 & \text{if $e_j \in E$ leaves $v_i \in V$} \\
	1 & \text{if $e_j \in E$ enters $v_i \in V$} \\
	0 & \text{otherwise} 
	\end{cases}\]
	And the $N_{\textrm{CUs}} \times N_{\textrm{paths}}$ \textit{NoC unoriented incidence matrix} $\hat{G}$ associated with $\mathcal{G}$ is defined as:
	\[[\hat{G}]_{ij} \coloneqq \left|[G]_{ij}\right|\]
\end{definition}
\vspace{1em}

\begin{definition}
	From the graph $\mathcal{G}_k = (V_k,E_k)$ representing the $k$-th application, the $N^k_{\textrm{nodes}} \times N^k_{\textrm{links}}$ \textit{application unoriented incidence matrix} $H^k$ associated with $\mathcal{G}_k$ is defined as:
	\[[H]^k_{ij} \coloneqq \begin{cases} 
	1 & \text{if $v^k_i \in V_k$ and $e^k_j \in E_k$ are incident} \\
	0 & \text{otherwise} 
	\end{cases}\]
	Furthermore, the $N_{\textrm{nodes}} \times N_{\textrm{links}}$ \textit{overall application unoriented incidence diagonal block-matrix} $H$ is defined as
	\[
	H \coloneqq \begin{bmatrix} 
	H^{1} &   &   \\
	& \ddots &   \\
	&   & H^{k} 
	\end{bmatrix}
	\]
\end{definition}
\vspace{1em}

\subsection{Definition of the Decision Variables}\label{subsubsec:decision_vars}

\begin{definition}
	The $N_{\textrm{CUs}} \times N_{\textrm{nodes}}$ \textit{decision matrix} $X^{\textrm{CUs} \to \textrm{nodes}}$, mapping Application Nodes to CUs, is defined as:
	\[X^{\textrm{CUs} \to \textrm{nodes}}_{ij}=\begin{cases} 
	1 & \parbox[t]{.51\linewidth}{if the CU $i$ is allocated to the Application Node $j$}\\
	0 & \text{otherwise} 
	\end{cases}
	\]
\end{definition}

\begin{definition}
	The $N_{\textrm{paths}} \times N_{\textrm{links}}$ \textit{decision matrix} $X^{\textrm{paths} \to \textrm{links}}$, mapping Application Links to 
	Physical Links, is defined as:
	\[X^{\textrm{paths} \to \textrm{links}}_{ij}=\begin{cases} 
	1 & \parbox[t]{.51\linewidth}{if the Physical Link $i$ is allocated to the Application Link $j$}\\
	0 & \text{otherwise} 
	\end{cases}
	\]
\end{definition}

\begin{definition}
	The $N_{\textrm{apps}} \times 1$ \textit{decision vector} $r$, representing which applications are executed, is defined as:
	\[r_i=\begin{cases} 
	1 & \parbox[t]{.51\linewidth}{if the application $i$ is running}\\
	0 & \text{if it is dropped} 
	\end{cases}
	\]
\end{definition}

\begin{definition}
	The $N_{\textrm{nodes}} \times 1$ \textit{decision vector} $M$, representing which application nodes are reallocated, is defined as:
	\[M_i=\begin{cases} 
	1 & \parbox[t]{.6\linewidth}{if the Application Node $i$ is moved from its previously allocated CU}\\
	0 & \text{otherwise} 
	\end{cases}
	\]
\end{definition}

\begin{definition}
	For $k \in \llbracket 1, N_{\textrm{app}} \rrbracket$, the $N_{\textrm{paths}} \times N_{\textrm{CUs}}$ \textit{decision matrix} $X^{\textrm{Comm, } k}$, representing communication paths between the $k$-th allocator application and every CU of the platform, is defined as:
	\[X^{\textrm{Comm, } k}_{ij}=\begin{cases} 
	-1 & \parbox[t]{.51\linewidth}{if the Physical Link $i$ is used to communicate between the allocator $k$ and the CU $j$ in the negative direction}\\
	1 & \parbox[t]{.51\linewidth}{if the Physical Link $i$ is used to communicate between the allocator $k$ and the CU $j$ in the positive direction}\\
	0 & \text{otherwise} 
	\end{cases}
	\]
	Positive (respectively negative) direction means that the communication takes place in the same (respectively opposite) direction as the edge of the directed graph $\mathcal{G}$, as in Fig. \ref{fig:blank_noc}.
\end{definition}

\vspace{1em}

\subsection{Formulation of the optimization model}\label{subsec:opt_prob}

This section gives the detail of the formulation of the optimization problem that is solved each time a new fault is detected. This formulation includes the detail of the chosen objective function and constraints.\\


\subsubsection{General form of the optimization model}

The allocation problem is formulated as an Integer Linear Program (ILP) of the form \cite{HillLieb01}:

\begin{equation}\label{eq:general_formulation}
	\begin{tabular}{l l }
		maximize & $f(\boldsymbol{x}) = \boldsymbol{c}^\mathrm{T} \boldsymbol{x}$ \\ 
		subject to & $M_1 \boldsymbol{x} \leq \boldsymbol{b_1}$\\ 
		& $M_2 \boldsymbol{x} = \boldsymbol{b_2}$\\
		and & $\boldsymbol{x}$ is a vector of integers.   
	\end{tabular}
\end{equation}

$\boldsymbol{x}$ is the global vector of decision variables derived from the vectorization and the aggregation of the decision matrices from Section \ref{subsubsec:decision_vars}. We define it formally as:

\begin{align}\label{eq:decision_vector}
\boldsymbol{x} &= 
\begin{pmatrix}
\mathrm{vec}( X^{\textrm{CUs} \to \textrm{nodes}})\\
\mathrm{vec}( X^{\textrm{paths} \to \textrm{links}})\\
r\\
M\\
\mathrm{vec}( X^{\textrm{Comm, } 1})\\
\vdots\\
\mathrm{vec}( X^{\textrm{Comm, } N_{\textrm{realloc}}})\\
\end{pmatrix}
\end{align}\\
where $\mathrm{vec}$ is the common vectorization function for matrices:
$\forall \ Q = (q_{i, j})_{1 \leq i \leq m ,\ 1 \leq j \leq n},\\
\mathrm{vec}(Q) = [q_{1,1}, \ldots, q_{m,1}, q_{1,2}, \ldots, q_{m,2}, \ldots, q_{1,n}, \ldots, q_{m,n}]^\mathrm{T}$.\\

$\boldsymbol{c}$ is the coefficients of the objective function and $M_1$, $M_2$, $\boldsymbol{b_1}$ and $\boldsymbol{b_2}$ are parameters derived from the aggregation of the constraints of the problem that are described in the following sections. For example, for each scalar inequality constraint, after arranging the inequality with all decision variables on the left-hand side in the same order as in $\boldsymbol{x}$ and constant terms on the right-hand side, a row containing the coefficients of the decision variables is added to $M_1$ and the constant term is added in the vector $\boldsymbol{b_1}$. The same is done for equality constraints to build $M_2$ and $\boldsymbol{b_2}$.\\


\subsubsection{Objective function}\label{subsubsec:obj_func}
Given the priority of the applications in an ascending order i.e. the first application has the highest priority and the $N_{\textrm{apps}}$-th application has the lowest one, the objective function is used in order to maximize the number of executed applications while minimizing the number of reallocations and the length of communication paths. The chosen objective function, in terms of $\boldsymbol{x}$ as defined above in equation \ref{eq:decision_vector}, is:


\begin{equation}\label{eq:obj_func}
\begin{split}
\max 
\Bigg\{
f(\boldsymbol{x}) = \sum_{k = 1}^{N_{\textrm{apps}} } \alpha_k \cdot r_k &- (\beta + 1)\sum_{j = 1}^{N_{\textrm{nodes}}} M_j \\
&- \sum_{k = 1}^{N_{\textrm{realloc}}}\sum_{j = 1}^{N_{\textrm{CUs}}}\sum_{i = 1}^{N_{\textrm{paths}}} \left|X^{\textrm{Comm, } k}_{ij}\right| 
\Bigg\} ,
\end{split}
\end{equation}
where 
\begin{equation}\label{eq:obj_func_coeff}
\begin{split}
\beta &= N_{\textrm{realloc}} \times N_{\textrm{CUs}} \times N_{\textrm{paths}} \\
\alpha_{N_{\textrm{apps}}} &= 
(\beta+1) \times N_{\textrm{nodes}} + \beta + 1 \\
\text{and } \forall k < N_{\textrm{apps}}: \\
\alpha_k &= \sum_{l = k+1}^{N_{\textrm{apps}}} \alpha_l + (\beta+1) \times N_{\textrm{nodes}} + \beta + 1 .
\end{split}
\end{equation}\\

The coefficients of the objective function are chosen to prioritize the different aspects that are optimized in this function.
\begin{enumerate}
	\item The first priority is to execute each application, even if it means more reallocations and longer communication paths.
	\item Then, minimizing the number of reallocations is more important than having shorter communication paths, since a reallocation temporarily interrupts the execution of the allocation. 
	\item When running all applications is not feasible, the priorities of the applications are enforced and executing any given application is more important than running any number of applications with a lower priority. However, if because of its geometry, a given application cannot be executed anyway, nothing prevents lower-priority applications from being executed.\\
\end{enumerate}

These requirements motivated the choice for the coefficients in the objective function. The proof that these coefficients allow the objective function to meet these requirements is given in Appendix. \\ 



Note that the problem of minimizing or maximizing the absolute value of the $X^{\textrm{Comm, } k}_{ij}$ variables, which is a nonlinear program, can be reformulated as a linear program by introducing additional variables and constraints \cite{HillLieb01}, that were not presented in the previous section for conciseness. For each entry $X^{\textrm{Comm, } k}_{ij}$ of $X^{\textrm{Comm}}$, an auxiliary variable $\hat{X}^{\textrm{Comm, } k}_{ij}$ is introduced to represent its absolute value, and two extra constraints are added: 
$$ + X^{\textrm{Comm, } k}_{ij} \leq \hat{X}^{\textrm{Comm, } k}_{ij}, $$
$$- X^{\textrm{Comm, } k}_{ij} \leq \hat{X}^{\textrm{Comm, } k}_{ij}. $$
$\hat{X}^{\textrm{Comm, } k}_{ij}$ is then used instead of $\left|X^{\textrm{Comm, } k}_{ij}\right|$ in the objective function. Because the objective function tends to maximize $- \left|X^{\textrm{Comm, } k}_{ij}\right|$, so to minimize $\hat{X}^{\textrm{Comm, } k}_{ij}$, one of the two previous constraints will be binding, the stricter one, where the left-hand side is the greatest and equal to $\max(+ X^{\textrm{Comm, } k}_{ij}, - X^{\textrm{Comm, } k}_{ij})$, which is exactly $\left|X^{\textrm{Comm, } k}_{ij}\right|$. The other constraint will be non-binding and therefore does not affect the optimal point. It thus ensures that $\hat{X}^{\textrm{Comm, } k}_{ij}$ is equal to $\left|X^{\textrm{Comm, } k}_{ij}\right|$. \\


\subsubsection{Constraints}

\paragraph{Domain of decision variables}
The decision variables $X^{\textrm{CUs} \to \textrm{nodes}}$, $X^{\textrm{paths} \to \textrm{links}}$, $r$ and $M$ are binary i.e. the value of their entries must be either 0 or 1.\\

The entries of $X^{\textrm{Comm, } k}$ for $k \in \llbracket 1,\ N_{realloc} \rrbracket$ must belong to $\{-1,\ 0,\ 1\}$.\\

\paragraph{Resource allocation and partitioning}
Several equations express the constraints of allocating the resources of the CUs to applications while enforcing partitioning on the platform.\\

\begin{itemize}

\item Each CU can be allocated to at most one application, as a way to enforce spatial partitioning of applications on the platform, i.e.
\begin{equation}
\forall i \in \llbracket 1, N_{\textrm{CUs}} \rrbracket,\ \sum_{j=1}^{N_{\textrm{nodes}}}  X^{\textrm{CUs} \to \textrm{nodes}}_{ij} \leq 1 .
\end{equation}

\item Each running Application Node must be assigned to exactly one CU, i.e. 
\begin{equation}
\forall i \in \llbracket 1, N_{\textrm{nodes}} \rrbracket,\ \sum_{j=1}^{N_{\textrm{CUs}}}  X^{\textrm{CUs} \to \textrm{nodes}}_{ji} = r_{N(i)} .
\end{equation}
$N(i)$ is the application number corresponding to Application Node $i$.\\

\item A physical communication link of the platform can be allocated to at most one Application Link\footnote{This does not mean that this communication link cannot be used for other communication purposes on the architecture, but only one of the Application Link computed by the compiler for the applications can be allocated to that physical communication link.}, i.e.
\begin{equation}
\forall i \in \llbracket 1, N_{\textrm{paths}} \rrbracket,\ \sum_{j=1}^{N_{\textrm{links}}}  X^{\textrm{paths} \to \textrm{links}}_{ij} \leq 1 .
\end{equation}

\item Each running Application Link must be assigned to exactly one physical communication link of the platform, i.e. 
\begin{equation}
\forall i \in \llbracket 1, N_{\textrm{links}} \rrbracket,\ \sum_{j=1}^{N_{\textrm{paths}}}  X^{\textrm{paths} \to \textrm{links}}_{ji} = r_{L(i)} .
\end{equation}
$L(i)$ is the application number corresponding to Application Link $i$.\\

\end{itemize}

\paragraph{Compliance with the platform}
An Application link, adjacent to an Application Node that has been mapped to a given CU, must be allocated to a Physical Link that is adjacent to that CU, i.e. \\
\begin{equation} \label{eq:compli1}
X^{\textrm{CUs} \to \textrm{nodes}}\ H = \hat{G}\ X^{\textrm{paths} \to \textrm{links}} .
\end{equation}

This equation (\ref{eq:compli1}) is equivalent to the scalar equations (\ref{eq:compli2}): 
\begin{equation}\label{eq:compli2}
\forall i \in \llbracket 1, N_{\textrm{CUs}} \rrbracket,\ \forall j \in \llbracket 1, N_{\textrm{links}} \rrbracket,$$ $$\sum_{k=1}^{N_{\textrm{nodes}}} X^{\textrm{CUs} \to \textrm{nodes}}_{ik} \ H_{kj} = \sum_{l=1}^{N_{\textrm{paths}}} \hat{G}_{il} \ X^{\textrm{paths} \to \textrm{links}}_{lj}.
\end{equation}
The left-hand side is equal to one if and only if the CU $i$ has been allocated to Application Node $k$ and Application Node $k$ is adjacent to Application Link $j$. The right-hand side is equal to one if and only if the CU $i$ is adjacent to the Physical Link $l$ and the Physical Link $l$ is allocated to Application Link $j$, which proves the correctness of the constraint.\\





\paragraph{Reallocating several applications}
A given Application Node can either remain affected to the same CU, either be moved or be dropped:
\begin{equation} \label{eq:reallc_several_apps}
\begin{split}
\forall i \in \llbracket 1, N_{\textrm{CUs}} \rrbracket ,\ \forall j \in \llbracket 1, N_{\textrm{nodes}} \rrbracket ,\ 
\text{s.t. } X^{\textrm{CUs} \to \textrm{nodes}}_{\textrm{old}\ ij} = 1,\\ \left(1- r_{N(j)}\right) + M_{j} + X_{ij}^{\textrm{CUs} \to \textrm{nodes}} = X^{\textrm{CUs} \to \textrm{nodes}}_{\textrm{old}\ ij},
\end{split}
\end{equation}
with $X^{\textrm{CUs} \to \textrm{apps}}_{\textrm{old}}$ be the parameter containing the mapping between CUs and Application Nodes computed during the previous allocation.\\

This constraint (\ref{eq:reallc_several_apps}) is ignored for the initial allocation.\\

\paragraph{Faults}
We assume the parallel computing platform is equipped a fault detection system that can detect and inform the allocators when a CU fails. From this information, we can add constraints to take into accounts fault in the platform. Within a CU $i$: \\
\begin{itemize}
	\item If the CU is healthy, any Application Node can be mapped on the CU.
	\item If the CU is faulty, then no Application Nodes can be mapped on the CU $i$:
	\begin{equation}
		\sum_{k=1}^{N_{\textrm{nodes}}}  X^{\textrm{CU} \to \textrm{apps}}_{ik} = 0 .
	\end{equation}
\end{itemize}

The detection of this fault is either assumed for the model or detected by the voter using the majority rule described in \ref{subsec:soft_comp}.\\

\paragraph{Communication constraints} \label{par:comm_constr}
Since the allocators are executed on the platform, we must ensure that they will be able to send the allocation they computed to the other CUs of the platform, given the communication links that allows each CU to send a message only through its neighbors. Therefore, we must make sure that there exists a path from each allocator to the other CUs.\\
\begin{equation}
\forall k \in \llbracket 1, N_{\textrm{realloc}} \rrbracket ,\ G\ X^{\textrm{Comm, } k} = S^k
\end{equation}
where $S^k$ is the $N_{\textrm{CUs}} \times N_{\textrm{CUs}}$ \textit{source-sink matrix} $S^k$, depending on $X^{\textrm{CUs} \to \textrm{nodes}}$ and defined by: 

\[[S]^k_{ij} \coloneqq \begin{cases} 0 & \text{if $\mathrm{deg}(v_i) = 0$ in $\mathcal{G}$} \\
0 & \text{if CU $i$ is faulty} \\
-X^{\textrm{CUs} \to \textrm{nodes}}_{i \ \textrm{node\_of\_alloc}(k)} + [I_{N_{\textrm{CUs}}}]_{ij} & \text{otherwise} \end{cases}.\]\\
where the degree of a vertex $\mathrm{deg}(v_i)$ is the number of edges connected to it, and $\textrm{node\_of\_alloc}(k)$ is the Application Node corresponding to allocator $k$. When the CU is neither faulty nor without any neighbor, in each path between an allocator and a given CU $i$, the allocator is the source (-1) and the CU $i$ is the sink (+1). \\

\paragraph{Constraints specific to the architecture}
Additional constraints can be added to respect specific aspects of the considered architecture. \\

For example, some multi-core architectures \cite{REDEFINE:1},
where intra-application communication between CU can happen only in a specific way as 
illustrated in Fig. \ref{orientation}, orientation of the applications on the architecture matters because nodes that can communicate in a given orientation will not be able to do so if they are rotated on the architecture. Therefore the orientation as computed by the compiler must be enforced. \\

To ensure correct orientation of applications, another set of constraints 
is also needed. In order to enforce this, the numbering of the CUs on the platform is used. For example, as illustrated in Fig. \ref{orientation}, a CU has always a number difference of $-1$ with its right neighbor and $+N_{\textrm{row}}$ with its top neighbor, where $N_{\textrm{row}}$ is the number of Tiles per row of the NoC ($N_{\textrm{row}}=4$ in our example).\\

The difference between the numbers of the contiguous pairs of CUs allocated to an application must match the orientation computed by the compiler. \\

Let $j^k$ be the index of the top-left node of the $k$-th application:

\begin{equation}
\begin{split}
\forall i \in \llbracket 1, N_{\textrm{CUs}} \rrbracket ,\ 
\forall k \in \llbracket 1, N_{\textrm{apps}} \rrbracket ,\\
X^{\textrm{CUs} \to \textrm{nodes}}_{ij^k} = X^{\textrm{CUs} \to \textrm{nodes}}_{(i+1)(j^k+1)}\\
X^{\textrm{CUs} \to \textrm{nodes}}_{ij^k} = X^{\textrm{CUs} \to \textrm{nodes}}_{(i+N_{\textrm{row}})(j^k+N_{\textrm{row}}^k)}.
\end{split}
\end{equation}


where $N_{\textrm{row}}^k$ is the number of nodes per row of the $k$-th application.\\

\begin{figure}[h]
	\captionsetup{justification=centering}
	\centering
	\includegraphics[width=0.75\linewidth]{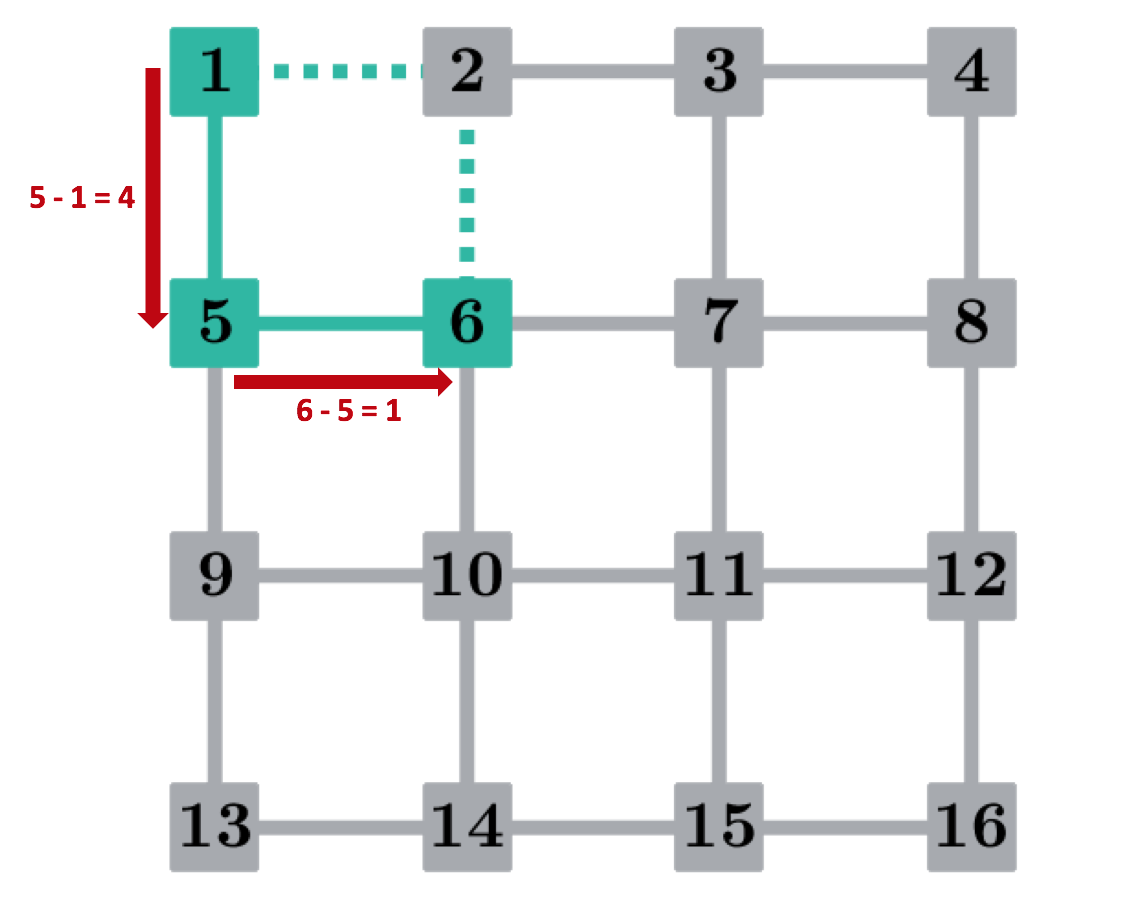}
	\caption{By equating the difference between two CUs' indices allocated to an application to a specific number, the spatial orientation of the application can be enforced.}
	\label{orientation}
\end{figure}

\section{Decentralization of the allocation system}\label{sec:decentr}


In this paper, we use the word \textit{decentralized} to qualify a system where no single CU has control over all the other ones in the parallel architecture: there is no central CU whose failure jeopardizes the operation of the whole parallel architecture. In safety words, this means that \textit{no CU constitutes a single point of failure}.\\

We focus here on CUs, but there are other elements that may be a single point of failure and that we do not take into account in this work. For example, electrical power may be provided by one unique and central power supply unit, which is an obvious single point of failure if not designed carefully. To mitigate the effect of other single point failures, methods for safety assessment process may be conducted \cite{sae1996guidelines}.\\


\subsection{N-modular redundancy and majority voting system}\label{sub:nmr_vote}

To develop a decentralized allocation system for the considered parallel computing platform, we chose to use the concept of N-modular redundancy with a majority voting system \cite{TMR}.\\

In this approach, $N_{\textrm{realloc}}$ is an odd number greater or equal to $3$, and $N_{\textrm{realloc}}$ copies of the same sub-tasks are executed in parallel. $N_{\textrm{realloc}}$ is taken odd to avoid the case where equal number of copies agree on two different results. The copies are fed with the same inputs and their outputs are then sent to a majority voting system. As illustrated in Fig. \ref{fig:vote}, the voting system compares the outputs of the redundant copies and filters them: only the result that has been computed by the majority of the redundant copies will be transmitted, i.e. the result computed by at least $ \frac{N_{\textrm{realloc}} + 1}{2}$ redundant copies. The voting system is also used to report the failure of the redundant copies that do not match the majority result. \\

\begin{figure}[!h]
	\captionsetup{justification=centering}
	\centering
	\includegraphics[width=0.6\linewidth]{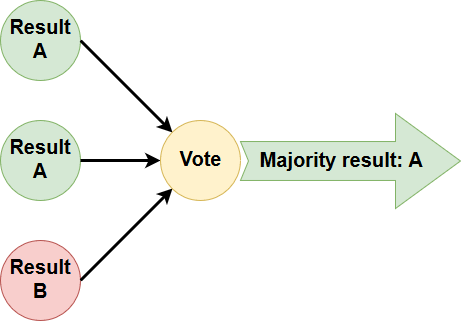}
	\caption{Illustration of the voting process with 3 redundant copies.}
	\label{fig:vote}
\end{figure}


\subsection{Decentralized implementation}


The proposed idea to decentralize the allocation system is to 
execute $N_{\textrm{realloc}}$ modular redundant copies of the allocator application on the architecture itself, with a voting system implemented on each CU. For further examples, $N_{\textrm{realloc}}$ will be taken equal to $3$. \\

In normal conditions, the $N_{\textrm{realloc}}$ copies of the allocator compute the same allocation, since they solve an identical ILP problem, with same inputs and constraints, and because GLPK is a deterministic solver. This allocation is then broadcast to every CU, including the ones executing the allocators. \\

If a CU not running an allocator fails, all 3 allocators compute the same new allocation, in which the affected application is assigned to a new CU, according the algorithm described previously in Section \ref{sec:algo}. This new allocation is then broadcast and received by all CUs. Since the 3 signals that the CUs receive are coherent, they all comply with it and therefore, the affected application is reallocated. \\

On the other hand, as illustrated in Fig. \ref{fig:decentr}, if a CU that was running a copy of the allocator is affected by a fault, the 2 other ones will compute the same new allocation where the affected copy is assigned to a new healthy CU. Regardless of what the faulty allocator computes, only the two coherent allocation sent by the two healthy allocators will be taken into account by the CUs, and the faulty allocator will be reallocated. \\

\begin{figure}[h] 
	\centering
	\captionsetup{justification=centering}
	\begin{subfigure}[t]{0.5\textwidth}
		\captionsetup{justification=centering}
		\centering
		\includegraphics[height=1.8in]{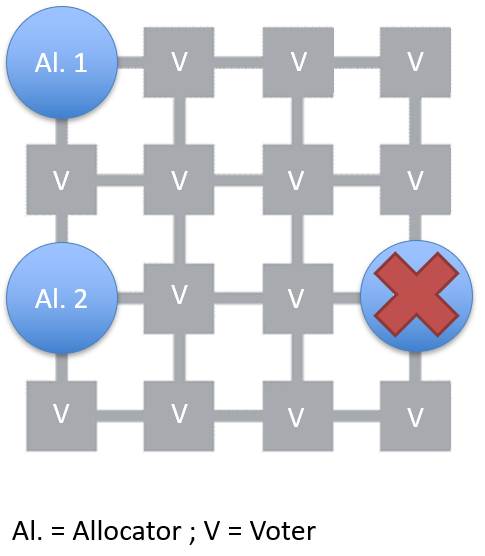}
		\caption{Layout of the allocators on the computing architecture.}
	\end{subfigure}

	\vspace{1em}
	
	\begin{subfigure}[t]{0.5\textwidth}
		\captionsetup{justification=centering}
		\centering
		\includegraphics[height=1.5in]{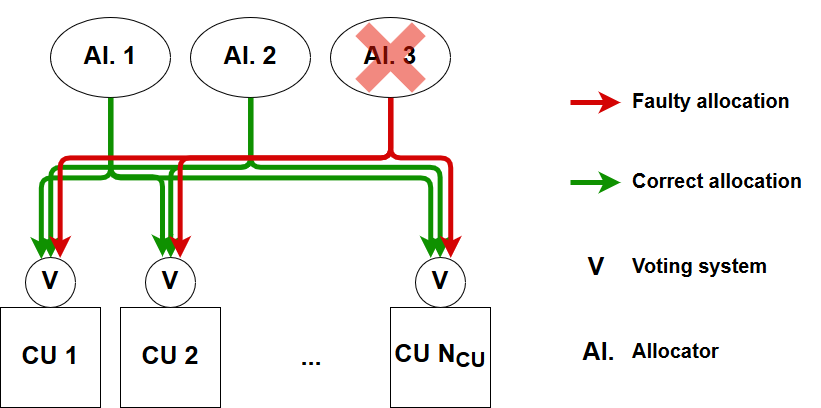}
		\caption{Information flow between allocators and CUs. \\
		The correct allocation includes the instruction for some node $i$ to run allocator 3.}
	\end{subfigure}
	\caption{Fault affecting a CU running an allocator.}
	\label{fig:decentr}
\end{figure}

\section{Practical example}

This section  describes the experimental setup that is used as a representation of the parallel computing platform as well as the result of reallocating safety-critical applications using the previously-described optimization problem.

\subsection{Representation of a parallel computing platform}


\subsubsection{Hardware components}

To illustrate and demonstrate the capabilities of the new formulation of the allocation algorithm in operational conditions, we choose to implement it on a cluster of single-board computer, Raspberry Pi \cite{pi},
in order to control and maintain operation of a physical system despite the presence of faults. \\

%
%
%
%

\paragraph{platform description}


In this setup, a cluster of parallel CUs of $4 \times 4$ units is replicated with a network of 16 Raspberry Pi computers. All of them are connected to a common routing switch in a local area network (LAN). 
Although the use of this common routing switch is a single point failure, it serves a purpose of visualizing that these Raspberry Pi computers are grouped as a single parallel CU. Also, for simplicity of visualization, the network is considered to be a square mesh, instead of a toroidal mesh. \\

One alternative of using a wired LAN network is to use a routing protocol for multi-hop mobile ad hoc network such as \cite{neumann2008better}. This ad-hoc network is implemented on a data link layer, which allows the data transportation protocol operate in a wireless and decentralized fashion as if there is a common routing switch.\\

The goal of this parallel computing platform is to show the possibility to decentralize the allocation process; therefore, there is no central computing unit outside the network and three copies of the allocator are executed on the network, as described in Section \ref{sec:algo}. \\


\paragraph{Faults} \label{subsub:faults}
Two types of faults are considered in this experiment. The first type is computational fault, which randomly affect the computations performed by the Raspberry Pi. We detect this kind of fault by using redundant copies of the considered application combined with a voting system that is described below in section \ref{subsec:soft_comp}.
The second type of fault is assumed to stop the operation of the computing unit it affects. We also assume that this fault can be detected by the network. In practice, each time one of these faults affects a Raspberry Pi, the status signal sent by this Raspberry Pi to the allocators is changed to a signal identifying it as faulty. 
Each of these two kinds of fault can be manually triggered or recovered thanks to a breadboard as seen in Fig. \ref{fig:switch} connected to each Raspberry Pi.\\

\begin{figure}[!h]
	\captionsetup{justification=centering}
	\centering
	\includegraphics[width=0.5\linewidth]{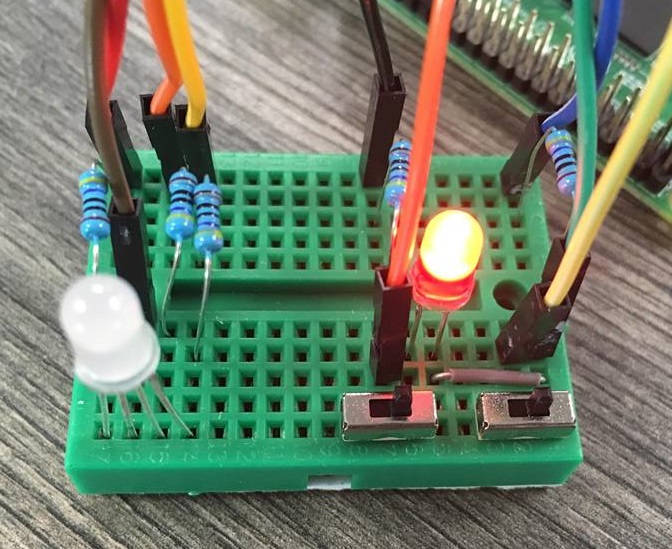}
	\caption{Hardware associated with each Raspberry Pi Tile.\\ The RGB-LED (bottom-left corner) represents the LED application. The red LED (right side) indicates an healthy Tile when turned on. Each switch is used to trigger one type of fault.}
	\label{fig:switch}
\end{figure}

\paragraph{Controlled system}
The physical system we chose to control with this parallel computing platform is a propulsion system, made of an electric fan mounted on a thrust stand. The fan is commanded by using Pulse width modulation (PWM). The measure of the thrust is used by a simple proportional controller executed as a safety-critical application on the platform in order to compute the value of the PWM command required to maintain the thrust at a constant value. \\

\begin{figure}[!h]
	\captionsetup{justification=centering}
	\centering
	\includegraphics[width=1\linewidth]{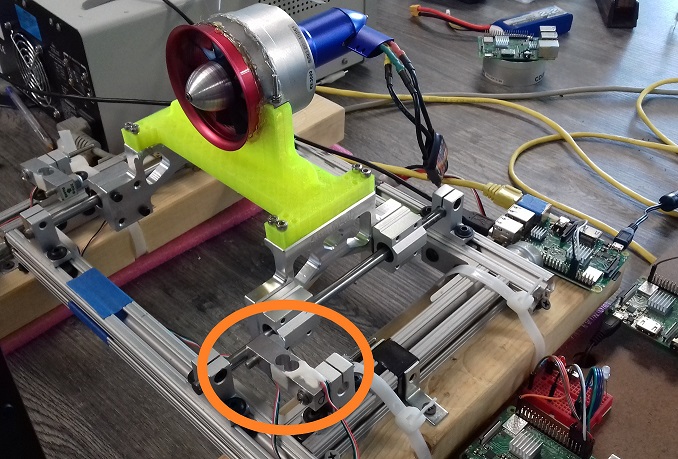}
	\caption{Electric fan mounted on the thrust stand. \\The delivered thrust is measured thanks to a load cell on the stand, indicated by the orange circle.}
	\label{fig:stand}
\end{figure}

An extra Raspberry Pi is used  as the micro-controller of the fan: it converts the value measured by the load cell, sends it to the \textit{Controllers} where the appropriate control value is computed, and generates the corresponding PWM signal controlling the fan. It must therefore be noted that although the same hardware representation is used, this Raspberry Pi does not correspond to the same components as the ones used for the CUs of the platform. \\ 

\subsubsection{Software components}\label{subsec:soft_comp}

Even if a controller is reallocated to healthy Tiles when it is affected by a fault, because of the time required to compute the new allocation and to actually reallocate the set of tasks, the operation of the fan may be temporarily altered during the reallocation process.\\

To avoid interruptions in the operation of the fan during reallocations, we also use a standard Triple Modular Redundancy (TMR) architecture \cite{TMR}. Three copies of the controller are executed on the parallel computing platform. Each one separately computes the duty-cycle value of the PWM signal that should be sent to the fan, given the thrust value that they all receive from the sensor. The three values are sent to the Raspberry Pi representing the micro-controller of the fan, where a voting system decides which control output should be used. 
The vote outputs the result that has been computed by the majority of the controllers, in this case 2 out of 3. Signals are here considered equal if their difference is smaller than a given tolerance. In the case of a fault affecting the output of one of the controllers, the two remaining healthy controllers ensure that the correct value is sent to the fan. The voting system also identifies which controller is not coherent with the two others and informs the allocators of the fault. The reallocation process that we implemented can then take place while providing continuity of service with the two healthy controllers. To complicate the reallocation tasks, each copy of the controller has been arbitrarily attributed to 2 Application Nodes. Concretely, only one of them is responsible of actual computations. \\

Three copies of the allocator execute the allocation algorithm itself. They have second rank priority immediately below the controllers, which represent the safety-critical application in this case. Giving the allocators only the second rank in the priority list can be justified when considering the case where only a controller or an allocator can be executed on the platform: the resource must be allocated to the safety-critical application, in this case the controller, that maintains the operation of the system, whereas the allocator is only a protection against further faults, but cannot alone ensure operation of the controlled system. \\

In addition to these six applications, one dummy application is considered in this experiment: it occupies 2 Tiles of the Fabric, but does not perform actual computation except changing the voltage in the RGB LED to display its corresponding color. It has the lowest priority. \\

Figure \ref{fig:apps_exp} sums up the list of considered applications for the experiment, their relative priority and the resources they require in terms of number of CUs. The initial allocation of these applications on the model is given in Fig. \ref{fig:init_alloc}. \\

\begin{figure}[!h]
	\captionsetup{justification=centering}
	\centering
	\includegraphics[width=0.5\textwidth]{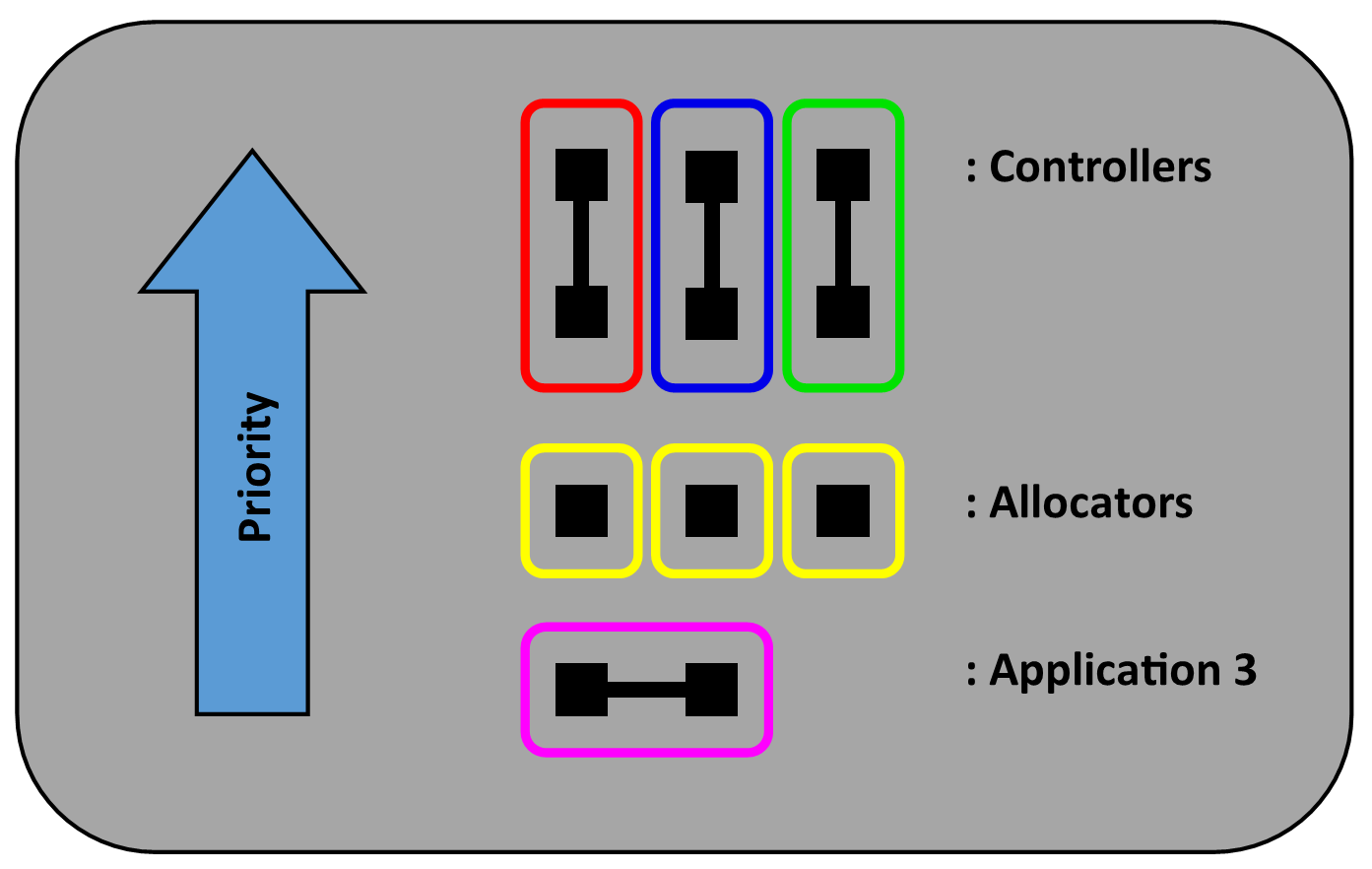}
	\caption{Considered applications for the experiment, their priority and the number of Tiles they require.}
	\label{fig:apps_exp}	
\end{figure}

\begin{figure}[!h]
	\captionsetup{justification=centering}
	\centering
	\includegraphics[width=0.5\textwidth]{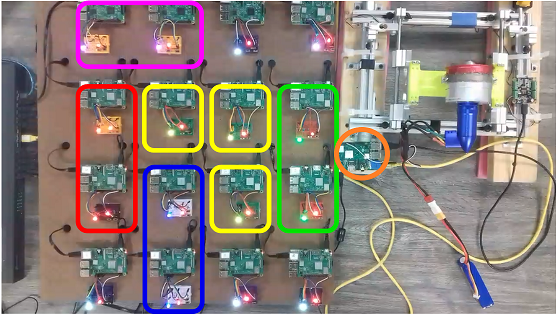}
	\caption{Initial allocation of the applications on the model.\\ 
	The orange circle identifies the extra Raspberry Pi for interactions with the stand.}
	\label{fig:init_alloc}	
\end{figure}

\subsection{Results}\label{sec:results}

Starting  from the initial allocation given in Fig. \ref{fig:result}, faults are triggered on the model. After each fault, the allocators detect the faulty Raspberry Pi and compute a new allocation that is then broadcast on the network. They maintain the execution of the safety-critical application as long as enough resources are available for it. \\

CUs surrounded by faulty neighbors are isolated from the rest of the platform and cannot communicate. As enforced by the communication constraints described in paragraph \ref{par:comm_constr}, such a CU is not given any task to execute and is as good as faulty, as seen in Fig. \ref{fig:result_a}. \\

When a CU recovers from a fault, an application can be allocated back to it as seen in Fig. \ref{fig:result_b}. Applications are dropped according to their priority when more computing units become faulty. However, as illustrated in Fig. \ref{fig:result_d}, when no space is available for all $1^{st}$ priority applications, lower priority ones are still allowed to be executed. \\

The voting system implemented on the fan needs at least two functioning and coherent controllers to run the fan (Fig. \ref{fig:result_d}), as previously explained in section \ref{subsec:soft_comp}: in case the signals received from the controllers are incoherent, it decides not to trust any of them and the engine stops, as in Fig. \ref{fig:result_e}. \\

Since the controllers have the highest priority, they are the last remaining applications to be executed in Fig. \ref{fig:result_g}. After this step, further faults will affect the controllers but no reallocation can happen because no more allocator is executed. \\

\begin{figure*}[h]
	\centering
	\begin{subfigure}[t]{0.5\textwidth}
		\captionsetup{justification=centering}
		\centering
		\includegraphics[height=1.65in]{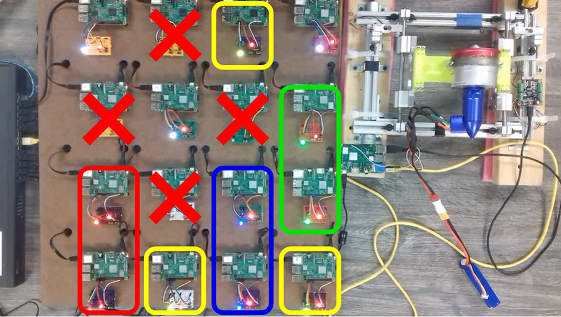}
		\caption{After 4 faults, Application 3 has to be dropped. The isolated Tile cannot be used.}
		\label{fig:result_a}
	\end{subfigure}%
	\begin{subfigure}[t]{0.5\textwidth}
		\captionsetup{justification=centering}
		\centering
		\includegraphics[height=1.65in]{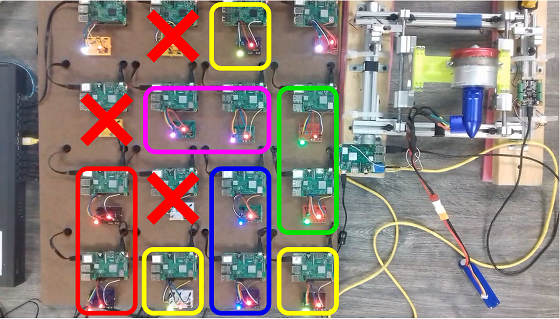}
		\caption{When a Tile recovers from a fault, an allocation can be reallocated to it.}
		\label{fig:result_b}
	\end{subfigure}
	\begin{subfigure}[t]{0.5\textwidth}
		\captionsetup{justification=centering}
		\centering
		\includegraphics[height=1.65in]{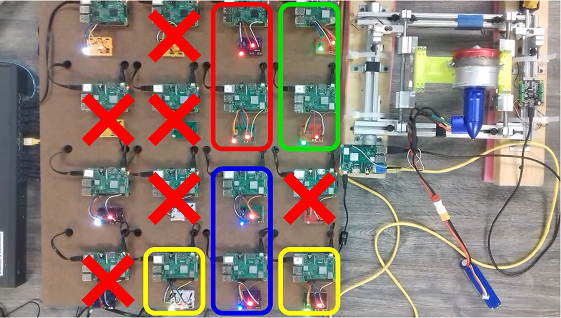}
		\caption{After more faults, Application 3 and a controller have to be dropped by lack of resources.}
		\label{fig:result_c}
	\end{subfigure}%
	\begin{subfigure}[t]{0.5\textwidth}
		\captionsetup{justification=centering}
		\centering
		\includegraphics[height=1.65in]{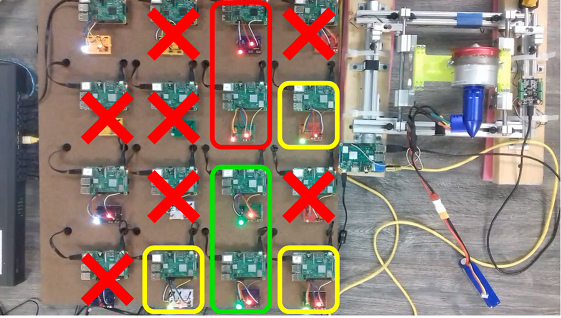}
		\caption{One controller is dropped. The 3 controllers can still be executed, even if all 1st priority applications are not.}
		\label{fig:result_d}
	\end{subfigure}
	\begin{subfigure}[t]{0.5\textwidth}
		\captionsetup{justification=centering}
		\centering
		\includegraphics[height=1.65in]{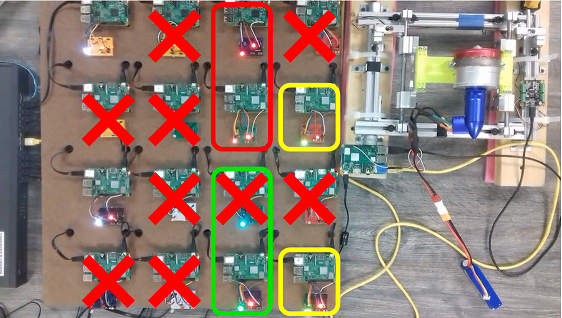}
		\caption{One of the two controllers is affected by a computational fault: the fan stops since the fan voter does not trust any of the two incoherent controllers.}
		\label{fig:result_e}
	\end{subfigure}%
	\begin{subfigure}[t]{0.5\textwidth}
		\captionsetup{justification=centering}
		\centering
		\includegraphics[height=1.65in]{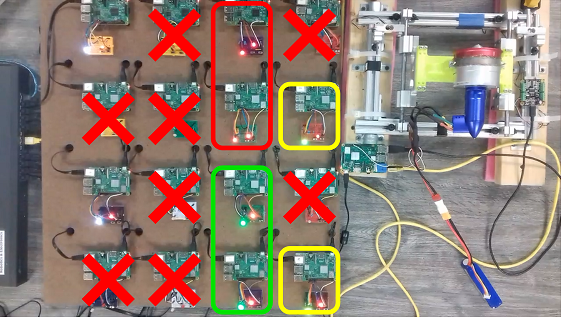}
		\caption{The computational faults disappear: the fan starts again.}
		\label{fig:result_f}
	\end{subfigure}

	\begin{subfigure}[t]{0.5\textwidth}
		\captionsetup{justification=centering}
		\centering
		\includegraphics[height=1.65in]{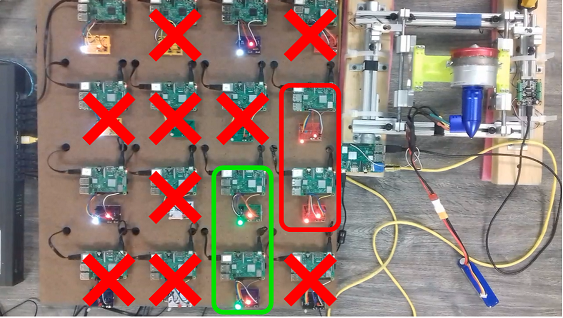}
		\caption{All allocators have been dropped: no further reallocation is possible.}
		\label{fig:result_g}
	\end{subfigure}
	\caption{Result of the task allocation algorithm. A full video of the demo is available at the link below.}
	\label{fig:result}
	\color{blue} \footnotesize \url{https://gtvault-my.sharepoint.com/:v:/g/personal/lsutter6_gatech_edu/EQbz60ttNU1KkzFo0l0tycIBDyboI9KU0SHs4ntq8lPwAw?e=glyKtj}
\end{figure*}


\section{Conclusion}\label{sec:conclusion}

This work presented a decentralized allocation algorithm for parallel computing architectures, where individual Computational Units can be affected by faults. The described method consisted in representing the architecture by an abstract graph and formulating the allocation problem as an optimization problem, with the form of a Integer Linear Program. \\

Decentralizing the allocation process has been achieved through redundancy of the allocator executed on the architecture. That way, no centralized element decides of the allocation of the entire architecture. \\

An experimental reproduction of a parallel computing architecture has also been built. It has been used to demonstrate the capabilities of the proposed allocation process to maintain operation of a physical system in a decentralized way while individual component fail. \\


The proposed work assumed that faults affecting the Computational Units of the architecture were automatically detected by the allocation algorithm, so that it is able to compute a new allocation every time a fault affects a Computational Unit. This work can be improved by defining a more precise model of the considered faults and a method to detect them. One first approach to identify dead Computational Unit would be a simple heartbeat that each would send to the allocators. 
A CU not sending its heartbeat would be considered faulty. One challenge to tackle in this approach is the fact that the allocators do not have a fixed position in the architecture, and therefore, the heartbeat of each CU would have to be broadcast through the entire architecture to be sure to reach all allocators. Another solution would be to include the position of the allocator in the allocation message that they broadcast, so that the CU know where to send back their heartbeat. In both cases, the amount communication packets transmitted through the architecture drastically increases. \\

The second lead for improvement is the way communication isolation are taken into account in the allocation problem. For now, only individual nodes with all of their neighbors being faulty were considered isolated. However, an entire area of the architecture, can be isolated from the allocator. The problem becomes quite tricky when the architecture is split in two halves that are isolated one from the other: a decision must be made to decide which area is isolated from the other. It seems that the area with the highest number of allocators should be privileged, since they are the ones that will send the new allocation to other CU, and therefore the ones in the other isolated area will not be able to receive this new allocation. They should therefore be considered as lost CUs.\\

Also, it should be considered that not only the Computational Units can fails, but also the communication links between them. The effect of such faults would be the same as isolating the Computational Units from their neighbors and would make more of them unavailable. It would also change the communication paths usable to connect the allocators to other applications and would affect their position since minimizing these paths is a part of the optimization problem. \\



\section*{Appendix}\label{sec:appendix}

\subsection*{Coefficients of the objective function}\label{app:coef}

This appendix provides the proof that the coefficients in the objective function from equation \ref{eq:obj_func} allow to meet the requirements stated in Section \ref{subsec:opt_prob}. For convenience, this objective function is rewritten here:

\begin{equation*}
	\begin{split}
	\max 
	\Bigg\{
	f(\boldsymbol{x}) = \sum_{k = 1}^{N_{\textrm{apps}} } \alpha_k \cdot r_k &- (\beta + 1)\sum_{j = 1}^{N_{\textrm{nodes}}} M_j \\
	&- \sum_{k = 1}^{N_{\textrm{realloc}}}\sum_{j = 1}^{N_{\textrm{CUs}}}\sum_{i = 1}^{N_{\textrm{paths}}} \left|X^{\textrm{Comm, } k}_{ij}\right| 
	\Bigg\} ,
	\end{split}
	\tag{\ref{eq:obj_func}}
\end{equation*}
where 
\begin{equation}
\begin{split}
\beta &= N_{\textrm{realloc}} \times N_{\textrm{CUs}} \times N_{\textrm{paths}} \\
\alpha_{N_{\textrm{apps}}} &= 
(\beta+1) \times N_{\textrm{nodes}} + \beta + 1 \\
\text{and } \forall k < N_{\textrm{apps}}: \\
\alpha_k &= \sum_{l = k+1}^{N_{\textrm{apps}}} \alpha_l + (\beta+1) \times N_{\textrm{nodes}} + \beta + 1 .
\end{split} 
\tag{\ref{eq:obj_func_coeff}}
\end{equation}\\

The requirements of Section \ref{subsec:opt_prob} are also rewritten below.
\begin{enumerate}
	\item When solving the optimization problem, the objective function \ref{eq:obj_func} privileges executing any given application, even if it implies more reallocations and longer communication paths.
	\item When solving the optimization problem, the objective function \ref{eq:obj_func} privileges minimizing the number of reallocations, even if it implies longer communication paths.
	\item When solving the optimization problem, the objective function \ref{eq:obj_func} privileges executing a given application compared to running any number of applications with a lower priority. \\
\end{enumerate}


The following theorems prove that these requirements are met.\\

\begin{theorem} \label{thm:1}
	$\forall \ \tilde{k} \in \llbracket 1,\ N_{\textrm{apps}} \rrbracket:$ 
	$$\alpha_{\tilde{k}} > (\beta + 1)\sum_{j = 1}^{N_{\textrm{nodes}}} 1 + \sum_{k = 1}^{N_{\textrm{realloc}}}\sum_{j = 1}^{N_{\textrm{CUs}}}\sum_{i = 1}^{N_{\textrm{paths}}} 1, $$
	that is, the contribution to the value of the objective function for executing application $\mathrm{app}_{\tilde{k}}$ is greater than the maximum contribution for reducing the number of reallocations and the length of the communication paths. 
\end{theorem}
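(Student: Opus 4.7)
The right-hand side is a constant independent of $\tilde{k}$: since $\beta = N_{\text{realloc}} \times N_{\text{CUs}} \times N_{\text{paths}}$, the triple sum of $1$'s collapses to $\beta$, and the inequality to be shown becomes simply
\[
\alpha_{\tilde{k}} \;>\; (\beta+1) \, N_{\text{nodes}} + \beta.
\]
I would therefore reduce the whole theorem to proving that lower bound on $\alpha_{\tilde{k}}$ for every index $\tilde{k}$, and proceed by backward (downward) induction on $\tilde{k}$, starting at $\tilde{k} = N_{\text{apps}}$ and descending to $\tilde{k} = 1$.

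\textbf{Base case.} For $\tilde{k} = N_{\text{apps}}$, the defining equation gives $\alpha_{N_{\text{apps}}} = (\beta+1)\, N_{\text{nodes}} + \beta + 1$, which is exactly one more than the target bound. So the strict inequality holds trivially.

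\textbf{Inductive step.} Assume the bound holds for all $l$ with $\tilde{k} < l \le N_{\text{apps}}$; in particular, each such $\alpha_l$ is strictly positive (a much weaker consequence, which is all that is really needed). Then by the recursion
\[
\alpha_{\tilde{k}} \;=\; \sum_{l=\tilde{k}+1}^{N_{\text{apps}}} \alpha_l \;+\; (\beta+1)\, N_{\text{nodes}} + \beta + 1,
\]
and since the sum on the right is nonnegative, $\alpha_{\tilde{k}} \ge (\beta+1)\, N_{\text{nodes}} + \beta + 1 > (\beta+1)\, N_{\text{nodes}} + \beta$. This closes the induction and proves the theorem.

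\textbf{Remarks on difficulty.} There is no real obstacle here: the only thing to be careful about is noticing that the triple-sum of ones on the right-hand side telescopes to the defined value of $\beta$, which then makes the ``$+\beta + 1$'' summand in the definition of $\alpha_{N_{\text{apps}}}$ line up exactly to yield the strict inequality at the base case. Once that observation is made, the induction is immediate because the recursion for $\alpha_k$ adds a nonnegative sum of later $\alpha_l$'s to the base-case value. I would state the proof in the order above, reducing to the bound on $\alpha_{\tilde{k}}$, handling the base case by direct substitution, and closing with the one-line inductive step.
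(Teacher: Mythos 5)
Your proof is correct and follows essentially the same route as the paper: collapse the triple sum of ones to $\beta$, then lower-bound $\alpha_{\tilde{k}}$ by the constant term $(\beta+1)N_{\textrm{nodes}}+\beta+1$ appearing in its definition. Your explicit downward induction (to justify that the sum of later $\alpha_l$'s is nonnegative) is a slightly more careful write-up than the paper's one-line ``by definition'' step, and it also cleanly delivers the strict inequality, which the paper's stated bound $\alpha_{\tilde{k}} \geq (\beta+1)N_{\textrm{nodes}}+\beta$ technically does not.
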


\begin{proof}
	$\forall \ \tilde{k} \in \llbracket 1,\ N_{\textrm{apps}} \rrbracket:$
	$$\alpha_{\tilde{k}} \geq (\beta+1) \times N_{\textrm{nodes}} + \beta,$$ by definition of $\alpha_{\tilde{k}}$.
	
	Now, $$(\beta + 1)\sum_{j = 1}^{N_{\textrm{nodes}}} 1 + \sum_{k = 1}^{N_{\textrm{realloc}}}\sum_{j = 1}^{N_{\textrm{CUs}}}\sum_{i = 1}^{N_{\textrm{paths}}} 1 = (\beta+1) \times N_{\textrm{nodes}} + \beta .$$
	
	So $$\alpha_{\tilde{k}} > (\beta + 1)\sum_{j = 1}^{N_{\textrm{nodes}}} 1 + \sum_{k = 1}^{N_{\textrm{realloc}}}\sum_{j = 1}^{N_{\textrm{CUs}}}\sum_{i = 1}^{N_{\textrm{paths}}} 1. $$
\end{proof}

Theorem \ref{thm:1} proves that requirement 1 is met. \\

\begin{theorem} \label{thm:2}
	$$(\beta + 1) \times 1 > \sum_{k = 1}^{N_{\textrm{realloc}}}\sum_{j = 1}^{N_{\textrm{CUs}}}\sum_{i = 1}^{N_{\textrm{paths}}} 1, $$
	that is, the contribution to the value of the objective function for not reallocating one Application node is greater then the maximum contribution for reducing the length of communication paths.
\end{theorem}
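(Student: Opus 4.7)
The plan is to unfold the right-hand side using the definition of $\beta$ given in equation~(\ref{eq:obj_func_coeff}) and observe that the inequality collapses to the trivial statement $\beta+1 > \beta$.

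First, I would evaluate the triple sum directly. Since the summand is the constant $1$, the iterated sum equals the product of the three upper indices:
\begin{equation*}
\sum_{k=1}^{N_{\textrm{realloc}}}\sum_{j=1}^{N_{\textrm{CUs}}}\sum_{i=1}^{N_{\textrm{paths}}} 1 = N_{\textrm{realloc}} \times N_{\textrm{CUs}} \times N_{\textrm{paths}}.
\end{equation*}
Second, I would recall from the definition block~(\ref{eq:obj_func_coeff}) that $\beta$ was chosen precisely as $\beta = N_{\textrm{realloc}} \times N_{\textrm{CUs}} \times N_{\textrm{paths}}$, so the right-hand side of the desired inequality is simply $\beta$. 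Third, the claim $(\beta + 1) \times 1 > \beta$ is immediate.

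There is no real obstacle here; the content of the theorem lies entirely in having picked the constant $\beta$ large enough at definition time, and the proof just verifies that this choice suffices. The only thing worth stressing explicitly in the write-up is the link between the numeric bound in the theorem statement and the defining formula for $\beta$, so that the reader sees that the ``$+1$'' margin is exactly what separates ``not reallocating one Application node'' from the worst-case contribution of the communication-path penalty term in the objective~(\ref{eq:obj_func}). Once this identification is made, Theorem~\ref{thm:2} establishes requirement~2, mirroring the role that Theorem~\ref{thm:1} played for requirement~1.
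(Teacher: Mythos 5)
Your proof is correct and follows exactly the same route as the paper's: evaluate the constant triple sum as $N_{\textrm{realloc}} \times N_{\textrm{CUs}} \times N_{\textrm{paths}}$, identify it with $\beta$ by definition, and conclude from $\beta + 1 > \beta$. No substantive difference.
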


\begin{proof}
	
	$$ \beta + 1 > \beta = N_{\textrm{realloc}} \times N_{\textrm{CUs}} \times N_{\textrm{paths}} = \sum_{k = 1}^{N_{\textrm{realloc}}}\sum_{j = 1}^{N_{\textrm{CUs}}}\sum_{i = 1}^{N_{\textrm{paths}}} 1 $$.
	
\end{proof}

Theorem \ref{thm:2} proves that requirement 2 is met. \\

\begin{theorem} \label{thm:3}
	$\forall \ \tilde{k} \in \llbracket 1,\ N_{\textrm{apps}} - 1 \rrbracket:$ 
	$$\alpha_{\tilde{k}} > \sum_{l = \tilde{k}+1}^{N_{\textrm{apps}}} \alpha_l $$
	that is, the contribution to the value of the objective function for executing application $\mathrm{app}_{\tilde{k}}$ is greater than the contribution for executing every applications with lower priority than $\mathrm{app}_{\tilde{k}}$, which are $\mathrm{app}_{\tilde{k}+1}$ to $\mathrm{app}_{N_{\textrm{apps}}}$.
\end{theorem}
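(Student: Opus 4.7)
The plan is to apply the recursive definition of $\alpha_{\tilde{k}}$ given in equation \ref{eq:obj_func_coeff} directly, since the claim is essentially a tautological consequence of how the coefficients were chosen. Concretely, for $\tilde{k} \in \llbracket 1,\ N_{\textrm{apps}} - 1 \rrbracket$, the definition states
\begin{equation*}
\alpha_{\tilde{k}} = \sum_{l = \tilde{k}+1}^{N_{\textrm{apps}}} \alpha_l + (\beta+1)\times N_{\textrm{nodes}} + \beta + 1.
\end{equation*}
I would rearrange this identity to isolate the desired difference, obtaining
\begin{equation*}
\alpha_{\tilde{k}} - \sum_{l = \tilde{k}+1}^{N_{\textrm{apps}}} \alpha_l = (\beta+1)\times N_{\textrm{nodes}} + \beta + 1.
\end{equation*}

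Then I would observe that the right-hand side is strictly positive. This follows because $\beta = N_{\textrm{realloc}} \times N_{\textrm{CUs}} \times N_{\textrm{paths}} \geq 0$ and $N_{\textrm{nodes}} \geq 0$, so every term in the expression is nonnegative and at least the additive constant $+1$ is strictly positive. Hence the strict inequality $\alpha_{\tilde{k}} > \sum_{l = \tilde{k}+1}^{N_{\textrm{apps}}} \alpha_l$ holds.

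There is essentially no real obstacle here: the coefficients $\alpha_k$ were constructed precisely to make this inequality hold with a margin of $(\beta+1)\times N_{\textrm{nodes}} + \beta + 1$, exactly mirroring the style of proof already used for Theorems \ref{thm:1} and \ref{thm:2}. The only point worth double-checking is the boundary case $\tilde{k} = N_{\textrm{apps}} - 1$, where the sum on the right reduces to the single term $\alpha_{N_{\textrm{apps}}}$; the recursion still applies verbatim in this case, so no separate argument is needed. This closes the chain of three theorems and verifies that requirement 3 is met.
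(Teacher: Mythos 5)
Your proposal is correct and follows essentially the same route as the paper: both invoke the recursive definition of $\alpha_{\tilde{k}}$ and conclude from the strict positivity of the surplus term $(\beta+1)\times N_{\textrm{nodes}} + \beta + 1$. Your explicit restriction to $\tilde{k} \le N_{\textrm{apps}}-1$ is in fact slightly more careful than the paper's own write-up, which quantifies over all of $\llbracket 1, N_{\textrm{apps}} \rrbracket$ even though the recursion is only defined for $k < N_{\textrm{apps}}$.
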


\begin{proof}
	
	$\forall \ \tilde{k} \in \llbracket 1,\ N_{\textrm{apps}} \rrbracket:$
	$$\alpha_{\tilde{k}} = \sum_{l = \tilde{k}+1}^{N_{\textrm{apps}}} \alpha_l + (\beta+1) \times N_{\textrm{nodes}} + \beta + 1 > \sum_{l = \tilde{k}+1}^{N_{\textrm{apps}}} \alpha_l $$
	since $ (\beta+1) \times N_{\textrm{nodes}} + \beta + 1 > 0 $.
\end{proof}

Theorem \ref{thm:3} proves that requirement 3 is met. \\

\bibliographystyle{IEEEtran}
\bibliography{references}

\end{document}


\section*{Appendix 2: NoC model}
\begin{itemize}
	
	\item
	The matrix $G$ is the $N_{CR} \times N_{paths}$ incidence matrix of the graph representing the NoC, where $N_{CR}$ is the number of CRs on the fabric and $N_{paths}$ is the number of communication paths.
	\[G_{ij}=\begin{cases} 
	1 & \text{if the CR $i$ and the NoC path $j$ are incident} \\
	0 & \text{otherwise} 
	\end{cases}\]
	
	\item
	The matrix $A^{k}$  is the $N^{k}_{nodes} \times N^{k}_{links}$ incidence matrix representing the application $k$,  where $N^{k}_{nodes}$ is its number of nodes and $N^{k}_{links}$ its number of links.
	\[A^{k}_{ij}=\begin{cases} 
	1 & \parbox[t]{.75\linewidth}{if the node $i$ and the link $j$ of application $k$ are incident} \\
	0 & \text{otherwise}
	\end{cases}\]

\end{itemize}

$N_{nodes} = \sum_{k=1}^{N_{apps}}  N^{k}_{nodes}$ is the total number of application nodes.

$N_{links} = \sum_{k=1}^{N_{apps}}  N^{k}_{links}$ is the total number of application links.\\

\begin{itemize}
	
	\item
	$T^{CR}$ is a $N_{CR} \times N^{CR}_{types}$ incidence matrix describing the types of each Compute Resource. $N^{CR}_{types}$ is the number of different CR types.
	
	\[T^{CR}_{ij}=\begin{cases} 
	1 & \text{if the CR $i$ is of type $j$} \\
	0 & \text{otherwise} 
	\end{cases}\]
	
	\item
	$T^{paths}$ is a $N_{paths} \times N^{paths}_{types}$ incidence matrix describing the types of each NoC path. $N^{paths}_{types}$ is the number of different communication paths types.
	
	\[T^{paths}_{ij}=\begin{cases} 
	1 & \text{if the NoC path $i$ is of type $j$} \\
	0 & \text{otherwise} 
	\end{cases}\]
	
	\item
	For each application $A^{k}$, the type of each node is described in the $N^{k}_{nodes} \times N^{CR}_{types}$ incidence matrix $TN^{k}$. If a given node is of a certain type, it will need to be allocated to a CR of the same type.
	\[TN^{k}_{ij}=\begin{cases} 
	1 & \text{if the node $i$ is of type $j$ in application $k$} \\
	0 & \text{otherwise} 
	\end{cases}\]
	
	\item
	For each application $A^{k}$, the type of each link is described in the $N^{k}_{links} \times N^{paths}_{types}$ incidence matrix $TL^{k}$.
	\[TL^{k}_{ij}=\begin{cases} 
	1 & \text{if the link $i$ is of type $j$ in application $k$} \\
	0 & \text{otherwise} 
	\end{cases}\]
	
\end{itemize}

\begin{itemize}
	
	\item
	An overall application graph represented by the incidence matrix $A$ is constructed from the $A^{k}$ $\left(k=1, \dotsc ,N_{apps}\right)$ matrices as such:
	\[
	A = \begin{bmatrix} 
	A^{1} &   &   \\
	& \ddots &   \\
	&   & A^{k} 
	\end{bmatrix}
	\quad \text{is a $N_{nodes} \times N_{links}$ matrix.}
	\]
	
	The same is done with the $TN^{k}$ and $TL^{k}$ $\left(k=1, \dotsc ,N_{apps}\right)$ matrices:
	\[
	TN = \begin{bmatrix} 
	TN^{1}\\
	\vdots \\
	TN^{k} 
	\end{bmatrix}
	\quad \text{is a $N_{nodes} \times N^{CR}_{types}$ matrix.}
	\]
	
	\[
	TL = \begin{bmatrix} 
	TL^{1}\\
	\vdots \\
	TL^{k} 
	\end{bmatrix}
	\quad \text{is a $N_{links} \times N^{paths}_{types}$ matrix.}
	\]
\end{itemize}

$ $
\\


This section provides a proof of the statement in \ref{subsec:opt_prob}. Prior to the Theorem, some useful statements are given as follows

\begin{definition}
	The optimal solution $\boldsymbol{x}^*$ to an optimization problem described in Sec. \ref{subsubsec:obj_func} is a vectorization of the optimal value of all decision variables described in Sec. \ref{subsubsec:decision_vars}.
\end{definition}


\begin{fact}\label{fact:Nnodes}
	$\forall j \leq N_{\textrm{apps}}$, $N_{\textrm{nodes}} \geq \sum_{i=j}^{N_{\textrm{apps}}} N^{i}_{\textrm{nodes}} \geq N^j_{\textrm{nodes}} > 0 . $
\end{fact}


%
%
%
%
%
%
	




%
%
%
%
%
%
%


%
%

\begin{lemma}\label{lemma:alpha}
	Let $\alpha_k$ be given by the Eq. \ref{eq:obj_func_coeff}, then 
$\forall i < N_{\textrm{apps}},\  \forall j \leq N_{\textrm{apps}},\ i < j\ ;\ \alpha_i > \sum_{k = j}^{N_{\textrm{apps}}}\alpha_k > 0 . $
\end{lemma}

\begin{proof}
	\begin{align*}
	\alpha_i &= \sum_{l = i+1}^{N_{\textrm{apps}}} \alpha_l +  (\beta+1) \times N_{\textrm{nodes}} + 1 > 0\\
	\alpha_i &> \sum_{l = i+1}^{N_{\textrm{apps}}} \alpha_l \\
	&\geq \sum_{k = j}^{N_{\textrm{apps}}}\alpha_k
	\end{align*}
	since $i < j \Rightarrow i + 1 \leq j$. Q.E.D.

\end{proof}

Now, we are ready to state the Theorem.

\begin{theorem}\label{theorem:alpha}
	Given the optimization problem described in Sec. \ref{subsubsec:obj_func}, and if by reallocating the non-lowest priority application $i$ requires other $N_{i'}$ applications to be reallocated as well as another $N_{i''}$ lower priority applications to be dropped, then its optimal solution $\boldsymbol{x}^*$ always reallocates $N_{i'}$ applications and drops $N_{i''}$ lower priority applications in order to reallocate an application $i$ if there are not enough CRs to execute that application unless some lower priority ones are dropped.
\end{theorem}

\begin{proof}
	By contradiction, suppose there exists an optimal solution $\hat{\boldsymbol{x}}^*$ that drops the non-lowest priority application $i$ in order to keep other $N_{i'}$ applications and $N_{i''}$ lower priority applications, which means $f(\hat{\boldsymbol{x}}^*) > f(\boldsymbol{x}^*)$. The objective function is
	
	\begin{align}
	\sum_{k = 1}^{N_{\textrm{apps}} } \alpha_k \cdot r_k - (\beta+1)\sum_{j = 1}^{N_{\textrm{nodes}}} M_j - X^{\textrm{Comm}},
	\end{align}
	where 
	\begin{align}
	\begin{split}
	X^{\textrm{Comm}} &= \sum_{k = 1}^{N_{\textrm{realloc}}}\sum_{j = 1}^{N_{\textrm{CRs}}}\sum_{i = 1}^{N_{\textrm{paths}}} \left|X^{\textrm{Comm, } k}_{ij}\right| , \\
	\beta &= N_{\textrm{realloc}} \times N_{\textrm{CRs}} \times N_{\textrm{paths}} \\
	\alpha_k &= \sum_{l = k+1}^{N_{\textrm{apps}}} \alpha_l + (\beta+1) \times N_{\textrm{nodes}} + 1 .
	\end{split}
	\end{align}
	

	Let $\mathcal{I}' = \{i'_{1} \ldots i'_{N_{i'}} \}$ be the set of the $N_{i'}$ applications required to be allocated, $\mathcal{I}'' = \{i''_{1} \ldots i''_{N_{i''}} \}$ be the set of the other $N_{i''}$ applications required to be dropped and $\mathcal{I} = \{i\} \cup \mathcal{I}' \cup \mathcal{I}''$, and by dropping an application $i$ means that the value of $r_k$ corresponding to an application $i$ must be zero as well as every element of a decision vector $M$ since no Application Nodes are required to moved. Therefore, the objective function for $\hat{\boldsymbol{x}}^*$ is
	\begin{align}
	\begin{split}	
	f(\hat{\boldsymbol{x}}^*) &= \sum_{k = 1|k \notin \mathcal{I}}^{N_{\textrm{apps}}} \alpha_k \cdot r_k + \sum_{k \in \mathcal{I}' \cup \mathcal{I}''}^{} \alpha_k \cdot 1 - X^{\textrm{Comm}}.
	\end{split}
	\end{align}
	
	On the other hand, for the objective function $f(\boldsymbol{x}^*)$, the value of $r_k$ and $M_j$(s) corresponding to an application $i$ must be one, since it has been reallocated in order to keep it running as well as the ones corresponding the other $i'$ applications. However, the value of $r_k$ and $M_j$ corresponding the other $N_{i'}$ applications must be zero since they have been dropped. Therefore, the objective function for $\boldsymbol{x}^*$ is
	
	\begin{align}
	\begin{split}
	f(\boldsymbol{x}^*) = &\sum_{k = 1|k \notin \mathcal{I}}^{N_{\textrm{apps}}} \alpha_k \cdot r_k + \sum_{k \in \{i\} \cup \mathcal{I}'}^{} \alpha_k \cdot 1\\
	&- (\beta+1)\sum_{l \in \{i\} \cup \mathcal{I}'}^{} \left(\sum_{j = 1}^{N^{l}_{\textrm{nodes}}} 1\right) - X^{\textrm{Comm}}.
	\end{split}
	\end{align}
	Then, 
	\begin{align}
	\begin{split}
	f(\hat{\boldsymbol{x}}^*) - f(\boldsymbol{x}^*) = &\sum_{k \in \mathcal{I}''}^{} \alpha_k - \left(\alpha_i - (\beta+1)\sum_{l \in \{i\} \cup \mathcal{I}'}^{} N^{l}_{\textrm{nodes}}\right)\\	
	= &\sum_{k \in \mathcal{I}''}^{} \alpha_k + (\beta+1)\sum_{l \in \{i\} \cup \mathcal{I}'}^{} N^{l}_{\textrm{nodes}}\\
	& - \left(\sum_{l = i+1}^{N_{\textrm{apps}}} \alpha_l + (\beta+1) \times N_{\textrm{nodes}} + 1\right)\\
	= &\left(\sum_{k \in \mathcal{I}''}^{} \alpha_k - \sum_{l = i+1}^{N_{\textrm{apps}}} \alpha_l \right) - 1\\
	&+ (\beta+1)\left(\sum_{l \in \{i\} \cup \mathcal{I}'}^{} N^{l}_{\textrm{nodes}} - N_{\textrm{nodes}}\right).
	\end{split}
	\end{align}
	By Fact \ref{fact:Nnodes} and Lemma \ref{lemma:alpha},
	\begin{align}
	\begin{split}
	f(\hat{\boldsymbol{x}}^*) - f(\boldsymbol{x}^*) = &- \sum_{k>i|k \notin \mathcal{I}''}^{} \alpha_k - (\beta+1)\sum_{l \notin \{i\} \cup \mathcal{I}'}^{} N^{l}_{\textrm{nodes}} - 1\\
	f(\hat{\boldsymbol{x}}^*) - f(\boldsymbol{x}^*) <&\ 0\\
	f(\hat{\boldsymbol{x}}^*) <&\ f(\boldsymbol{x}^*)
	\end{split}
	\end{align}
	which is a contradiction. Thus, the statement in Sec. \ref{subsec:opt_prob} is proven. Q.E.D.
\end{proof}


This section provides a proof of the statement in \ref{subsec:opt_prob}. Prior to the Theorem, some useful statements are given as follows

\begin{definition}
	The optimal solution $\mathbf{x}^*$ to an optimization problem described in Sec. \ref{subsubsec:obj_func} is a vectorization of the optimal value of all decision variables described in Sec. \ref{subsubsec:decision_vars}.
\end{definition}

\begin{definition}
	For each application $i \in \llbracket 1, N_{apps} \rrbracket$, we define the set $\mathcal{N}^{\mathcal{I}}$ as the set of the Application Nodes of application $i$.
	Furthermore, $\mathcal{N}^{\mathcal{I}}$ is a set of Application Nodes required to run a set, $\mathcal{I} = \{i_1 \ldots i_I\}$, of $I$ applications.
\end{definition}

\begin{fact}\label{fact:Nnodes}
	$\forall j \leq N_{apps}$, $N_{nodes} \geq \sum_{i=j}^{N_{apps}} N^{i}_{nodes} \geq N^j_{nodes} > 0 $
\end{fact}

\begin{fact}\label{fact:alpha}
	Let $\alpha_{N_{apps}} \in \mathbb{N}$ and $\forall i < N_{apps}$, $\alpha_i = \sum_{k = i+1}^{N_{apps}} \alpha_k$, then 
	$\forall i < j \leq N_{apps}$, $\alpha_i \geq \sum_{k=j}^{N_{apps}}\alpha_k > 0$
\end{fact}

%
%
%
%
%
%
	




%
%
%
%
%
%
%


%
%

\begin{lemma}\label{lemma:alpha_bis}
	Let $\alpha_k$ be given by the Eq. \ref{eq:obj_func_coeff}, then 
$\alpha_i > \sum_{k = j}^{N_{apps}}\alpha_k\ ;\ \forall i < N_{apps},\  \forall j \leq N_{apps},\ i < j$
\end{lemma}

\begin{proof}
	The proof of the inequality part of the statement follows from Fact \ref{fact:alpha_bis}. For the strictly inequality part, consider the base case where $i = N_{apps} - 1$ and $j = N_{apps}$, 
	\begin{align}
	\begin{split}	
	\alpha_{N_{apps} - 1} &> \sum_{k = N_{apps}}^{N_{apps}}\alpha_{k}\\
	\alpha_{N_{apps}} + \sum_{l = N_{apps} - 1}^{N_{apps}} N^l_{nodes} + N_{nodes} + 1	&> \alpha_{N_{apps}}
	\end{split}
	\end{align}
	which is true. Also, suppose that the case where $i + 1 < N_{apps}$ and $j + 1 \leq N_{apps}$, i.e. $\alpha_{i + 1} > \sum_{k=j+1}^{N_{apps}}\alpha_k$,
	is true. By induction, we will show that the case where $i < N_{apps}$ and $j \leq N_{apps}$, i.e. $\alpha_{i} > \sum_{k=j}^{N_{apps}}\alpha_{k}$, is true. 
	By canceling the term $N_{nodes} + 1$ on both sides, 
	\begin{align}
	\begin{split}
	\sum_{l = i+1}^{N_{apps}} \alpha_l + \sum_{l = i}^{N_{apps}} N^l_{nodes} >
	\sum_{k = j+1}^{N_{apps}} \alpha_k &+ \sum_{l = j}^{N_{apps}} N^l_{nodes}\\
	\alpha_{i+1} + \sum_{l = i+2}^{N_{apps}} \alpha_l + \sum_{l = i}^{N_{apps}} N^l_{nodes} >
	\sum_{k = j+1}^{N_{apps}} \alpha_k &+ \sum_{l = j}^{N_{apps}} N^l_{nodes}\\
	\left(\alpha_{i+1} - \sum_{k = j+1}^{N_{apps}} \alpha_k\right) + \sum_{l = i+2}^{N_{apps}} \alpha_l\ &+ \\
\left(\sum_{l = i}^{N_{apps}} N^l_{nodes} - \sum_{l = j}^{N_{apps}} N^l_{nodes}\right) &> 0\\
	\end{split}
	\end{align}
	which is true by the induction hypothesis, Fact \ref{fact:Nnodes_bis}, and Fact \ref{fact:alpha_bis}. Q.E.D.
\end{proof}

Now, we are ready to state the Theorem.

\begin{theorem}\label{theorem:alpha_bis}
	Given the optimization problem described in Sec. \ref{subsubsec:obj_func}, and if by reallocating the non-lowest priority application $i$ requires other $N_{i'}$ applications to be reallocated as well as another $N_{i''}$ lower priority applications to be dropped, then its optimal solution $\mathbf{x}^*$ always reallocates $N_{i'}$ applications and drops $N_{i''}$ lower priority applications in order to reallocate an application $i$ if there are not enough CRs to execute that application unless some lower priority ones are dropped.
\end{theorem}

\begin{proof}
	By contradiction, suppose there exists an optimal solution $\hat{\mathbf{x}}^*$ that drops the non-lowest priority application $i$ in order to keep other $N_{i'}$ applications and $N_{i''}$ lower priority applications, which means $f(\hat{\mathbf{x}}^*) > f(\mathbf{x}^*)$. Since the objective function
	\begin{align}
	\begin{split}	
	f(\mathbf{x}) &= \sum_{k = 1}^{N_{apps}} \alpha_k \cdot r_k - \sum_{j = 1}^{N_{nodes}} M_j
	\end{split}
	\end{align}
	where
	\begin{align}
	\begin{split}
	\alpha_k &= \sum_{l = k+1}^{N_{apps}} \alpha_l + \sum_{l = k}^{N_{apps}} N^l_{nodes} + N_{nodes} + 1
	\end{split}
	\end{align}
	Let $\mathcal{I}' = \{i'_{1} \ldots i'_{N_{i'}} \}$ be the set of the $N_{i'}$ applications required to be allocated, $\mathcal{I}'' = \{i''_{1} \ldots i''_{N_{i''}} \}$ be the set of the other $N_{i''}$ applications required to be dropped and $\mathcal{I} = \{i\} \cup \mathcal{I}' \cup \mathcal{I}''$, and by dropping an application $i$ means that the value of $r_k$ corresponding to an application $i$ must be zero as well as every element of a decision vector $M$ since no Application Nodes are required to moved. Therefore, the objective function
	\begin{align}
	\begin{split}	
	f(\hat{\mathbf{x}}^*) &= \sum_{k = 1|k \notin \mathcal{I}}^{N_{apps}} \alpha_k \cdot r_k + \sum_{k \in \mathcal{I}' \cup \mathcal{I}''}^{} \alpha_k \cdot 1
	\end{split}
	\end{align}
	
	On the other hand, for the objective function $f(\mathbf{x}^*)$, the value of $r_k$ and $M_j$(s) corresponding to an application $i$ must be one, since it has been reallocated in order to keep it running as well as the ones corresponding the other $i'$ applications. However, the value of $r_k$ and $M_j$ corresponding the other $N_{i'}$ applications must b zero since they have been dropped. Therefore, the objective function
	
	\begin{align}
	\begin{split}
	f(\mathbf{x}^*) = &\sum_{k = 1|k \notin \mathcal{I}}^{N_{apps}} \alpha_k \cdot r_k + \sum_{k \in \mathcal{I}'}^{} \alpha_k \cdot 1 + \alpha_i \cdot 1\\
	&- \sum_{l \in \mathcal{I}'}^{} \left(\sum_{j = 1}^{N^{l}_{nodes}} 1\right) - \sum_{j = 1}^{N^i_{nodes}} 1
	\end{split}
	\end{align}
	Then, 
	\begin{align}
	\begin{split}
	f(\hat{\mathbf{x}}^*) - f(\mathbf{x}^*) = &\sum_{k \in \mathcal{I}''}^{} \alpha_k - \left(\alpha_i - \sum_{l \in \mathcal{I}'}^{} N^{l}_{nodes} - N^i_{nodes}\right)\\	
	= &\sum_{k \in \mathcal{I}''}^{} \alpha_k + \sum_{l \in \mathcal{I}'}^{} N^{l}_{nodes} + N^i_{nodes}\\
	& - \left(\sum_{l = i+1}^{N_{apps}} \alpha_l + \sum_{l = i}^{N_{apps}} N^l_{nodes} + N_{nodes} + 1\right)\\
	= &\left(\sum_{k \in \mathcal{I}''}^{} \alpha_k - \sum_{l = i+1}^{N_{apps}} \alpha_l \right) - 1\\
	&+ \left(\sum_{l \in \mathcal{I}'}^{} N^{l}_{nodes} - N_{nodes}\right)\\
	&+ \left(N^i_{nodes} - \sum_{l = i}^{N_{apps}} N^l_{nodes}\right)
	\end{split}
	\end{align}
	By Lemma \ref{lemma:alpha},
	\begin{align}
	\begin{split}
	f(\hat{\mathbf{x}}^*) - f(\mathbf{x}^*) = &- \sum_{k>i|k \notin \mathcal{I}''}^{} \alpha_k - 1\\
	&- \sum_{l \notin \mathcal{I}'}^{} N^{l}_{nodes} - \sum_{l = i+1}^{N_{apps}} N^l_{nodes} < 0\\
	f(\hat{\mathbf{x}}^*) &< f(\mathbf{x}^*)
	\end{split}
	\end{align}
	which contradicts. Thus, the statement in Sec. \ref{subsec:opt_prob} is proven. Q.E.D.
\end{proof}